\newcommand{\cJ}{\EuScript{J}}
\newcommand{\rmT}{\mathrm{T}}
\newcommand{\tP}{\mathbb{P}}
\newcommand{\tQ}{\mathbb{Q}}
\newcommand{\tR}{\mathbb{R}}
\newcommand{\tS}{\mathbb{S}}
\newcommand{\tU}{\mathbb{U}}
\newcommand{\tX}{\mathbb{X}}
\newcommand{\bfA}{\mathbf{A}}
\newcommand{\bfC}{\mathbf{C}}
\newcommand{\bfE}{\mathbf{E}}
\newcommand{\bfM}{\mathbf{M}}
\newcommand{\bfP}{\mathbf{P}}
\newcommand{\bfQ}{\mathbf{Q}}
\newcommand{\bfS}{\mathbf{S}}
\newcommand{\bfT}{\mathbf{T}}
\newcommand{\bfX}{\mathbf{X}}
\newcommand{\gA}{\mathfrak{A}}
\newcommand{\gB}{\mathfrak{B}}
\newcommand{\calH}{\mathcal{H}}
\newcommand{\calT}{\mathcal{T}}
\newcommand{\bt}{\begin{theorem}}
\newcommand{\et}{\end{theorem}}
\newcommand{\bl}{\begin{lemma}}
\newcommand{\el}{\end{lemma}}
\newcommand{\bp}{\begin{proposition}}
\newcommand{\ep}{\end{proposition}}
\newcommand{\bc}{\begin{corollary}}
\newcommand{\ec}{\end{corollary}}
\newcommand{\bd}{\begin{definition}\rm}
\newcommand{\ed}{\end{definition}}
\newcommand{\bex}{\begin{example}\rm}
\newcommand{\eex}{\end{example}}
\newcommand{\br}{\begin{remark}\rm}
\newcommand{\er}{\end{remark}}
\newcommand{\btbh}{\begin{table}[!ht]}
\newcommand{\etb}{\end{table}}
\newcommand{\bfgh}{\begin{figure}[!ht]}
\newcommand{\efg}{\end{figure}}
\newcommand{\bea}{\begin{eqnarray*}}
\newcommand{\eea}{\end{eqnarray*}}
\newcommand{\be}{\begin{eqnarray}}
\newcommand{\ee}{\end{eqnarray}}
\def\spaceN{\mathsf{N}}
\def\spaceZ{\mathsf{Z}}
\def\spaceR{\mathsf{R}}
\def\spaceC{\mathsf{C}} %is not used?
\def\mmod{\mathop{\mathrm{mod}}}
\def\sspan{\mathop{\mathrm{span}}}
\def\rank{\mathop{\mathrm{rank}}}
\newcommand{\diag}{\mathop{\mathrm{diag}}}
\def\adots{\mathinner{\mkern2mu\raise\p@\hbox{.}
\mkern2mu\raise4\p@\hbox{.}\mkern1mu
\raise7\p@\vbox{\kern7\p@\hbox{.}}\mkern1mu}}
\newcommand{\l@abcd}[2]{\hbox to\textwidth{#1\dotfill #2}}
\newcommand{\mvec}{\mathop{\mathrm{vec}}}
\newcommand{\Ms}{\mathfrak{M}}
\newcommand{\Ns}{\mathfrak{N}}
\newcommand{\Ls}{\mathfrak{L}}
\newcommand{\Ks}{\mathfrak{K}}
\newcommand{\As}{\mathfrak{A}}
\newcommand{\LsE}{\ell}
\newcommand{\KsE}{\kappa}
\newcommand{\AsE}{\alpha}
\newcommand{\elmat}{\bfE}
\newcommand{\Pmat}{P}
\def\sfH{\mathsf{H}}
\def\sfS{\mathsf{S}}
\newcommand{\pkg}[1]{\textsc{#1}}
\newtheorem{proposition}{Proposition}%[section]
\newtheorem{corollary}{Corollary}%[section]
\newtheorem{theorem}{Theorem}%[section]
\newtheorem{remark}{Remark}%[section]
\newtheorem{lemma}{Lemma}%[section]
\newtheorem{definition}{Definition}%[section]
\newtheorem{example}{Example}%[section]
\newtheorem{algorithm}{Algorithm}%[section]
\title{\LARGE \bf
Shaped extensions of singular spectrum analysis%
\thanks{This work is supported by the NG13-083 grant of Dynasty Foundation}% <-this % stops a space
}
\author{Alex Shlemov%
\thanks{St.~Petersburg State University, Russia
        {\tt\small shlemovalex@gmail.com}}
~and Nina Golyandina% <-this % stops a space
\thanks{St. Petersburg State University, Russia
        {\tt\small n.golyandina@spbu.ru}}
}
\begin{document}

\maketitle
\thispagestyle{empty}
\pagestyle{empty}

%%%%%%%%%%%%%%%%%%%%%%%%%%%%%%%%%%%%%%%%%%%%%%%%%%%%%%%%%%%%%%%%%%%%%%%%%%%%%%%%
\begin{abstract}
Extensions of singular spectrum analysis (SSA) for processing of non-rectangular images,
which can be also used for analysis of images and time series with gaps, are considered.
A circular version is suggested, which allows application of the method to the data given on a circle
or on a cylinder, e.g. cylindrical projection of a 3D ellipsoid.
The constructed Shaped SSA method with planar or circular topology
is able to produce low-rank approximations for images of complex shapes.
Together with Shaped SSA, a shaped version of the subspace-based ESPRIT method for frequency estimation is developed.
Examples of 2D circular SSA and 2D Shaped ESPRIT are presented.
\end{abstract}

%%%%%%%%%%%%%%%%%%%%%%%%%%%%%%%%%%%%%%%%%%%%%%%%%%%%%%%%%%%%%%%%%%%%%%%%%%%%%%%%
\section{Introduction}
\label{sec:intro}
Singular spectrum analysis
\cite{Broomhead.King1986, Vautard.etal1992, Elsner.Tsonis1996, Golyandina.etal2001, Ghil.etal2002, Golyandina.Zhigljavsky2012}
is a widely used method of time series analysis and image processing, which is based on singular value decomposition of Hankel and block-Hankel matrices. From the viewpoint of low-rank approximation, singular spectrum analysis is related to the problem of structured low rank approximation
\cite{Markovsky2012}.

SSA is closely related to subspace-based methods, where the signal estimation
is performed  with the help of the signal subspace. There is a lot of subspace-based methods of frequency estimation,
for example, the ESPRIT method \cite{Roy.Kailath1989} and its analogues \cite{Barkhuijsen.etal1987, VanHuffel.etal1994}.
The SVD of Hankel-like matrices is also widely used in system identification and model reduction (see \cite{Markovsky2012} for an overview).

Application of SSA is by far not limited to low-rank approximation and includes the problems of smoothing, trend extraction,
decomposition into a sum of trend, periodic and noise components, and many others. In general, SSA does not need the model given in advance
and can serve as a model-building tool \cite{Golyandina.etal2001,Golyandina.Zhigljavsky2012}.

SSA was originally proposed for time series and then was extended to 2D digital images (2D-SSA) \cite{Golyandina.Usevich2010,Golyandina.etal2013}. Related is the subspace-based 2D-ESPRIT method for estimation of 2D frequencies (see \cite{Rouquette.Najim01Estimation}). 2D-SSA and related subspace based methods find applications in texture analysis \cite{monadjemi04-towards}, seismology \cite{trickett08conf-f}, spatial gene expression data \cite{Holloway.etal2011}, medical imaging \cite{Shin.etal13MRiM-Calibrationless}, etc., and are gaining increasing popularity.

The SSA and 2D-SSA methods deal with series and rectangular images and use intervals and rectangles as moving windows.
This can limit applications of the methods; for example, the methods hardly process  circular-shaped images,
images with gaps, and so on. Due to this reason, in \cite{Golyandina.etal2013} Shaped 2D-SSA
was proposed, which can deal with images of arbitrary shape and can use arbitrary window shapes.

The SSA-based methods are supported by an effective implementation in the freely available \pkg{R}-package \pkg{Rssa} \cite{Korobeynikov.etal2013}. The implementation uses the fact that various SSA versions can be considered as particular cases of Shaped SSA.  This allows a unified implementation for ordinary SSA, multivariate SSA (MSSA),
2D-SSA, SSA for analysis of objects with gaps. The  details can be found in \cite{Golyandina.etal2013, Golyandina.Korobeynikov2013}.

The contribution of this paper is twofold. Firstly, we consider an extension of Shaped SSA
to circular topologies, such as cylindrical and toroidal. This extension is motivated by problems in gene expression data analysis  \cite{Holloway.etal2011}, where the data are obtained by cylindric projections of 3D surfaces. We discuss theoretical properties and effective implementation of this extension.
Secondly, we propose the Shaped ESPRIT method of parameter estimation for shaped images. The paper~\cite{Golyandina.etal2013} does not contain the description of Shaped ESPRIT and we compensate this here.
Several theoretical aspects are discussed. An implementation of both contributions is included in the \pkg{Rssa} package \cite{Korobeynikov.etal2013}.

In Section~\ref{sec:shaped_algo}, we describe the common scheme of SSA methods, which differ mainly in the
embedding step, and introduce the embedding operator for Shaped SSA, which includes both
planar and circular cases.
In Section~\ref{sec:implementation}, we discuss efficient implementation aspects and show that the methods developed for
the planar case are naturally extended to the circular case.
Section~\ref{sec:shaped_rank} contains several propositions about the Shaped-SSA rank of infinite arrays
and the common term of infinite arrays of finite Shaped-SSA rank.
In Section~\ref{sec:shaped_esprit}, we consider a common scheme of ESPRIT-like methods, then
present the algorithm of ESPRIT for the shaped circular case and prove its correctness.
In Section~\ref{sec:examples}, we consider two examples, for demonstration of the shaped version of 2D-ESPRIT and for
demonstration of the circular version of 2D-SSA.
Section~\ref{sec:conclusions} closes the paper with conclusions and further remarks.

\section{Shaped SSA algorithm}
\label{sec:shaped_algo}

\subsection{Structure of SSA-like algorithms}
We will follow the common structure of SSA-like algorithms described in \cite{Golyandina.etal2013}.
This common structure consists of the embedding, decomposition, grouping and reconstruction steps.
The input of an SSA-like algorithm is an object $\tX$ to be decomposed (time series, multivariate time series, image, etc.). The result of SSA-like algorithms is a decomposition of $\tX$  on
identifiable components in the form $\tX = \tX_1 + \dots + \tX_s$.

\smallskip
\textbf{1. Embedding step}:  construction of the trajectory matrix $\bfX=\calT (\tX)\in \sfH$,
where $\sfH$ is a space of structured  Hankel-like matrices. The structure of the matrix $\bfX$ (and the space $\sfH$) depends on the object to be decomposed  and on the \emph{window}, which is the main algorithm parameter.
In a sense, the window size reflects the resolution of the method, i.e., larger windows lead to more detailed decompositions.

\smallskip
\textbf{2. Decomposition step}: Singular Value Decomposition (SVD) of the trajectory matrix $\bfX=\sum_{i=1}^d \sqrt{\lambda_i} U_i V_i^\top = \sum_{i=1}^d \bfX_i$. Here $(\sqrt{\lambda_i},U_i,V_i)$ are so called eigentriples (abbreviated as ET) and consist of singular values, left and right singular vectors of $\bfX$.
Also, $\lambda_i$ and $U_i$ are eigenvalues and eigenvectors of $\bfX \bfX^\top$, $\{U_i\}_{i=1}^d$ forms an orthonormal basis of the column space
of $\bfX$, while $\{V_i\}_{i=1}^d$ forms an orthonormal basis of the row space. Vectors $V_i$ are also called factor vectors.

\smallskip
\textbf{3. Grouping step}:
partition $\{1,\ldots,d\} = \bigsqcup_{j=1}^s I_j$ and grouping of summands in the SVD decomposition to obtain a grouped matrix decomposition $\bfX=\sum_{j=1}^s \bfX_{I_j}$, where $\bfX_I= \sum_{k\in I} \bfX_k$.
The grouping with  $I_j=\{j\}$ is called elementary. The aim of this step is to group the SVD components to obtain an interpretable decomposition of the initial
object. This can be performed by means of analysis of eigentriples.

\smallskip
\textbf{4. Reconstruction step}: decomposition of the initial object $\tX = \tX_1 + \ldots + \tX_s$,
where $\tX_j = \calT^{-1} \calH (\bfX_{I_j})$, $\calH$ is the operator of projection
on the space $\sfH$ (e.g., hankelization in the 1D case); $\calH(\bfX_I)=\sum_{i\in I}  \calH(\bfX_i)$ holds.

In this paper, we consider the Shaped SSA algorithm for shaped arrays. In different basic variants of SSA, all the steps except Embedding are the same; therefore, we should describe the embedding step only (although, a variety of modifications of SVD step can be also considered, see e.g. \cite{Shlemov.Golyandina2014}).
The embedding step for Shaped SSA is described in detail in Section~\ref{sec:embedding_shaped}. For simplicity, we consider only the 2-dimensional Shaped SSA. Multidimensional generalizations can be obtained analogously.

\subsection{Comments on the SSA-like algorithms}
For one-dimensional SSA, the theory related to choice of window sizes and grouping can be found
in \cite{Golyandina.etal2001,Golyandina.Zhigljavsky2012}. For multidimensional extensions this theory can be generalized;
however, it is much more complex and not well elaborated.
The main notion is the separability of components of the initial object. The separability theory answers the question
if the SVD expansion at Decomposition step allows the grouping that provides a desirable decomposition of the initial object
(at least, approximately); e.g. decomposition on such identifiable components as trend and oscillations, signal and noise, and so on.
Fortunately, the conditions of approximate separability are not too restrictive and many of series components can be
separated by SSA.

Thus, to perform SSA analysis for component extraction, one should choose window sizes based on the SSA theory
and then choose the set $I$ of eigentriple numbers (or the set of eigenvectors $\{U_i, i \in I\}$).
This set of eigenvectors can provide information about the object parametric model and also
allows the reconstruction of the extracted component, e.g., of a trend, a periodic component or a signal.

The model that underlies the most of the theory of SSA-like methods is associated with objects with rank-deficient trajectory matrices.
This model is related to the objects satisfying linear recurrence relations (e.g., $x_n=\sum_{i=1}^r a_i x_{n-i}$
for one-dimensional time series).

SSA-like methods are called non-parametric or free of model, since they do not use the model in the algorithm, what is very convenient
for the exploratory analysis, and also they allow to analyze the objects that satisfy the model only locally or
do not satisfy it at all; for example, SSA-like methods can be used for smoothing and filtering \cite{Golyandina.Zhigljavsky2012}.

\subsection{Shapes and actions with them}
\label{ssec:shapesandactions}

Formally, we call \emph{shape} $\gB$ a finite non-empty subset of $\{1,\ldots,T_x\}\times \{1,\ldots,T_y\}$,
where  $T_x, T_y \in \spaceN \cup \{\infty\}$, $\spaceN$ is the set of natural numbers.
The values $T_x$ and $T_y$ characterize the topology of the set containing this shape.
Finiteness of $T_x < \infty$ ($T_y < \infty$) means that the topology is circular and
the shapes are cyclic by the $x$-coordinate (the $y$-coordinate) with period $T_x$ ($T_y$).

A\emph{ $\gB$-shaped array }is a partially indexed array $\tX = \tX_\gB = (x_{(i, j)})_{(i, j) \in \gB}$.
Let us denote the space of $\gB$-shaped arrays as $\spaceR^\gB$.
Obviously, there is an isomorphism $\spaceR^\gB \sim \spaceR^B$,
where $B = |\gB|$ is the cardinality of the set $\gB$.
This isomorphism is not unique, since the elements in the shape can be ordered in different ways.
However, the result of the algorithm does not depend on the particular order.

For convenience, we consider the lexicographical order, which fixes the isomorphism
\begin{equation}
\label{eq:iso}
\cJ_\gB:\,\spaceR^\gB \mapsto \spaceR^B.
\end{equation}
We call $\cJ_{\gB}$ \emph{vectorization} and its inverse $\cJ_{\gB}^{-1}$ \emph{shaping}.

Next, we introduce the operation of addition in the considered topology
for two pairs of indices $\LsE = (\LsE_x, \LsE_y)$, $\KsE = (\KsE_x, \KsE_y)$.
For convenience, we omit the parameters $(T_x, T_y)$ and write $\oplus$ instead of $\oplus^{(T_x, T_y)}$:
\begin{equation*}
  \LsE \oplus \KsE =
  \big((\LsE_x + \KsE_x - 2) \mmod T_x + 1, (\LsE_y + \KsE_y - 2) \mmod T_y + 1\big).
\end{equation*}
Here $\mmod$ denotes the remainder of the integer division. Formally, $a \mmod \infty = a$ for any $a$.
Note that $-2$ and $+1$ in the definition of $\oplus$ is the consequence of the index enumeration starting from 1 (such that $\{1\}\oplus\{1\} = \{1\}$).

For two shapes $\gA$ and $\gB$, we modify the definition of the Minkowski sum in the following way:
\begin{equation*}
  \gA \oplus \gB = \{\alpha \oplus \beta \mid \alpha \in \gA, \beta \in \gB \}.
\end{equation*}

\subsection{Embedding step}\label{sec:embedding_shaped}
Now we are ready to define the embedding operator for the Shaped SSA algorithm.
\subsubsection{Input data and parameters of the embedding}
The input data are the topology characteristics $(T_x, T_y)$,
the shape $\Ns$, where $\Ns \subset \{1, \dots, T_x \} \times \{1, \dots, T_y\}$,
the $\Ns$-shaped array $\tX \in \spaceR^\Ns$.
The parameter of the algorithm is a \emph{window shape} $\Ls \subset \Ns$.
It is convenient to consider
the window shape in the form $\Ls = \{\LsE_1, \dots, \LsE_L\}$,
where $L=|\Ls|$ and $\LsE_i \in \spaceN^2$ are ordered in lexicographical order.

For each $\KsE \in \Ns$, we define a shifted $\Ls$-shaped subarray as
$\tX_{\Ls \oplus \{\KsE\}} = (x_\alpha)_{\alpha \in \Ls \oplus \{\KsE\}}$.
The index $\KsE$ is a position of the origin for the window. Consider the set of all possible
origin positions for $\Ls$-shaped windows:
\begin{equation*}
  \Ks = \{\KsE \in \Ns \mid \Ls \oplus \{\KsE\} \subset \Ns\}.
\end{equation*}
We assume that $\Ks$ is written as $\Ks = \{\KsE_1, \dots, \KsE_K\}$, where $K=|\Ks|\neq 0$ and $\KsE_j$ are ordered in lexicographical order.

If the shapes $\Ns$ and $\Ls$ are rectangles, then $\Ks$ is also rectangular and we call the version of Shaped SSA \emph{rectangular}.

\subsubsection{Embedding operator}
The trajectory matrix $\bfX$ is constructed by the embedding operator $\calT_\mathrm{ShSSA}: \spaceR^\Ns \to \spaceR^{L \times K}$
\begin{equation*}
\calT(\tX) = \calT_\mathrm{ShSSA}(\tX) := \bfX = [X_1, \dots, X_K],
\end{equation*}
where the columns
\begin{equation*}
  X_j = (x_{\LsE_i \oplus \KsE_j})_{i=1}^L
\end{equation*}
are vectorizations of the shaped subarrays $\tX_{\Ls \oplus \{\KsE_j\}}$.

The embedding operator $\calT_\mathrm{ShSSA}$ is obviously linear.
Denote its range as $\sfH^{\Ls \times \Ks} \subset \spaceR^{L \times K}$.
The subspace  $\sfH^{\Ls \times \Ks}$ consists of structured matrices, which we will call \emph{quasi-Hankel}. (In fact, they are generalizations of quasi-Hankel matrices from \cite{Mourrain.Pan2000}.)
If the operator $\calT_\mathrm{ShSSA}$ is injective, then it sets the isomorphism of the spaces
$\spaceR^\Ns$ and $\sfH^{\Ls \times \Ks}$.

\begin{remark}
The embedding operator $\calT_\mathrm{ShSSA}$ is injective if and only if $\Ls \oplus \Ks = \Ns$, i.e.,
if each point of the initial shaped array $\tX$ can be covered by a window of shape $\Ls$.
\end{remark}

If there are uncovered points, we can remove them and consider only the decomposition of the restricted array
$\tX' = (\tX)_{\Ns'}$, where $\Ns' = \Ls \oplus \Ks$.
Hereafter we will suppose that $\Ns = \Ls \oplus \Ks$, i.e., all the points are covered and the operator $\calT_\mathrm{ShSSA}$ is injective.

\subsection{Particular cases}
As shown in \cite{Golyandina.etal2013}, many variants of SSA can be considered as special cases of Shaped SSA. Basic SSA corresponds to the shapes
\begin{equation*}
  \Ns = \{1, \dots, N\} \times \{1\}, \qquad \Ls = \{1, \dots, L\} \times \{1\}.
\end{equation*}
The shaped arrays $\tX \in \spaceR^\Ns$ in this case are time series, and the set  $\sfH^{\Ls \times \Ks}$  is the set of ordinary Hankel matrices. SSA for time series with  missing values is obtained by removing the indices of missing elements from the shape $\Ns$.

Multivariate SSA (MSSA) for multivariate time series of possibly different lengths
can be considered as Shaped SSA for a shaped 2D array consisting of stacked series
and a rectangular window shape $\Ls = \{1, \dots, L\} \times \{1\}$.
%In this case,
Then
the space $\sfH^{\Ls \times \Ks}$ is the space of concatenated Hankel matrices.

The 2D-SSA algorithm \cite{Golyandina.Usevich2010} corresponds to Shaped SSA with
\begin{gather*}
  \Ns = \{1, \dots, N_x\} \times \{1, \dots, N_y\},\\
  \Ls = \{1, \dots, L_x\} \times \{1, \dots, L_y\}.
\end{gather*}
The shaped arrays $\tX \in \spaceR^\Ns$ are rectangular images, the windows $\Ls$ are rectangular, and the set $\sfH^{\Ls \times \Ks}$ is the set of \emph{Hankel-block-Hankel} matrices. For more details on different SSA variants as special cases of Shaped SSA, we refer the reader to \cite[Sec. 5.3]{Golyandina.etal2013}.

All these SSA versions can be extended by allowing circular topology. In order to distinguish between different variants, we use the following terminology. If both $T_x$ and $T_y$ are infinite,
then we call the SSA version \emph{planar} (see Figure~\ref{fig:shaped_demo}, left); otherwise the term \emph{``circular''} is used. The case, when only one of $T_x$ and $T_y$ equals infinity, can be named
\emph{cylindrical} (see Figure~\ref{fig:shaped_demo}, center), while the case, when both of them are finite, is \emph{toroidal} (see Figure~\ref{fig:shaped_demo}, right).

\begin{figure}
  \centering
  \vspace{0.3cm}
  \begin{subfigure}{.32\columnwidth}
    \centering
    \includegraphics[width=0.6\columnwidth]{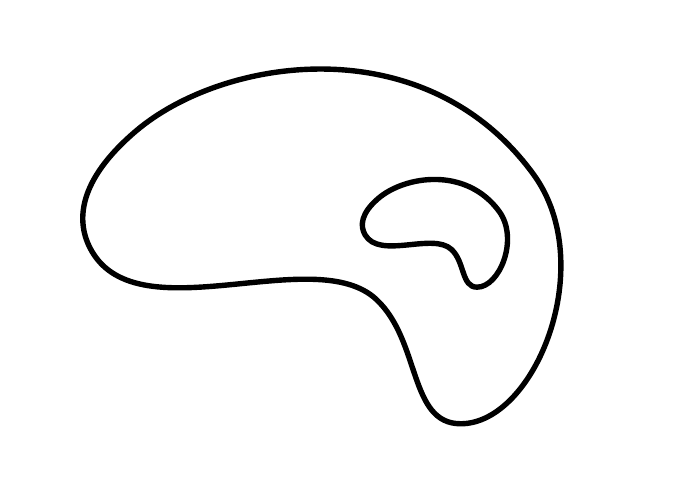}
  \end{subfigure}
  \begin{subfigure}{.32\columnwidth}
    \centering
    \includegraphics[width=0.6\columnwidth]{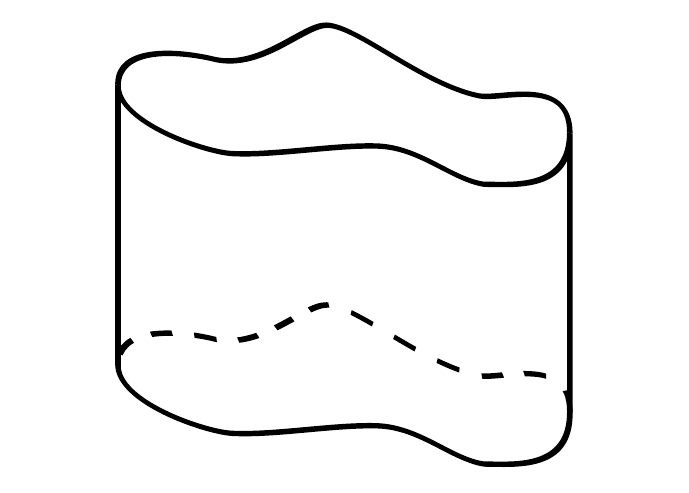}
  \end{subfigure}
  \begin{subfigure}{.32\columnwidth}
    \centering
    \includegraphics[width=0.6\columnwidth]{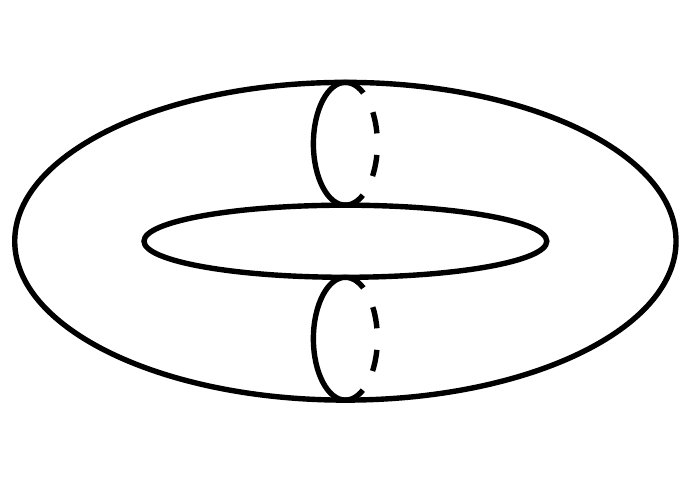}
  \end{subfigure}
  \caption{Left: shaped image and window; center: cylindrical topology; right: toroidal topology}
  \vspace{-0.5cm}
  \label{fig:shaped_demo}
\end{figure}

\section{Implementation issues}
\label{sec:implementation}
\subsection{Approaches to fast implementation}
The fast implementation of SSA-related methods, which was suggested in \cite{Korobeynikov2010} and is used in the \pkg{Rssa} package
\cite{Korobeynikov.etal2013},
relies on the following approaches:
\begin{enumerate}
  \item The truncated SVD calculated by Lanczos methods \cite[Ch. 9]{Golub.VanLoan1996} is used, since only a number of leading
  SVD components correspond to the signal and therefore are used at Grouping step of the SSA algorithm.
   \item Lanczos methods do not use an explicit representation of a decomposed matrix $\bfA$.
   They need only the results of multiplication of $\bfA$ and $\bfA^\top$ on some vectors.
   Due to special Hankel-type structure of $\bfA$ in the SSA-related algorithms,
   multiplication by a vector can be implemented very efficiently with the help of Fast Fourier Transform (FFT).
   %Moreover, it is not necessary to store the matrix $\bfA$ itself and therefore the algorithm is space-efficient.
     Fast SVD algorithms are implemented in the \pkg{R}-package \pkg{svd} \cite{Korobeynikov2013} in such a way that
     their input is the function of a vector which yields the result of fast multiplication of the vector by the trajectory matrix.
     Therefore, the use of \pkg{svd} in \pkg{Rssa} allows
     a fast and space-efficient implementation of the SSA-related algorithms.
   \item At Reconstruction step, hankelization or quasi-hankelization of a matrix of rank $1$ stored as  $\sqrt{\lambda}UV^\rmT$
  can be written by means of the convolution operator and therefore is also can be effectively implemented.
\end{enumerate}
The overall complexity of the computations is $O(r N \log (N))$, where $N = |\Ns|$ is the number of elements in the shaped array and $r$ is the number of considered eigentriples, see details in \cite{Korobeynikov2010} and in \cite{Golyandina.etal2013}. This makes the computations feasible for large data sets and large window sizes.  For example, the case of images of size $299 \times 299$ and a window size $100\times 100$ can be handled easily, whereas the conventional algorithms (e.g., the full SVD \cite{Golub.VanLoan1996}) are impractical, because the decomposed matrix is of size $10^4 \times 4\cdot 10^4$. Using larger window sizes is advantageous, since, for example, separability of signal from noise (in the ``signal+noise'' scenario) can be significantly improved \cite{Golyandina2010}.

These approaches were implemented for 1D, 2D-SSA
and planar version of Shaped 2D-SSA in the version of the \pkg{Rssa} package described in \cite{Golyandina.etal2013}.

\subsection{Fast implementation for circular topologies}
In this section, we show how the same approaches can be used for Shaped SSA with circular topologies. As
 shown next, it is not difficult to extend the expressions in \cite[Sec. 6]{Golyandina.etal2013} to the general case. This allows us to incorporate circular topologies in the new version of the \pkg{Rssa} package \cite{Korobeynikov.etal2013}.

\subsubsection{Use of the Lanczos methods}
Since the circular version differs by the form of the embedding operator only,
the Lanczos methods for the truncated SVD can accelerate the circular version of SSA algorithms in the same way
as they do it for the planar case.

\subsubsection{Fast multiplication of a vector on the trajectory matrix}
The fast implementation in the planar case is based on the representation of the trajectory matrix with the help of
a Hankel or Hankel-block-Hankel circulant. This was suggested for 1D case in \cite{Korobeynikov2010}. It was shown in \cite{Golyandina.etal2013}
that the same approach is valid for the shaped 2D case.

The first step is the embedding of the initial array into the Hankel-block-Hankel circulant $\bfC_{\rm HbH}$.
It is performed in a straightforward way for the rectangular case. For the shaped case, the $\Ns$-shaped array  is
padding by zeros to the rectangular array of sizes $N_x$ and $N_y$  such that
\begin{equation}
  \label{eq:NxNyCorrectness}
  \Ns \subset \{1, \dots, N_x\} \times \{1, \dots, N_y\}.
\end{equation}
In \cite{Golyandina.etal2013} the $\Ls$-trajectory matrix of a shaped 2D array $\tX$ is presented in the form
\begin{equation*}
  \bfX = \calT_{\rm ShSSA}(\tX) = P^\rmT_\Ls \bfC_{\rm HbH} P_\Ks,
\end{equation*}
where $P_\Ls\colon \spaceR^\Ls \to \spaceR^{N_x \times N_y}$ and $P_\Ks\colon \spaceR^\Ks \to \spaceR^{N_x \times N_y}$.
A shape  $\As = \{\AsE_1, \dots, \AsE_A\}$ is defined as
\begin{equation}
\label{eq:circ_rep}
  \Pmat_{\mathfrak{A}} = [\mvec(\elmat_{\AsE_1}) : \cdots : \mvec(\elmat_{\AsE_A})],
\end{equation}
$\elmat_{\AsE_i}$ is the $\As$-shape with $1$ on the $\AsE_i$th entry.
Then multiplication by $\bfX$ is reduced to multiplication by $\bfC_{\rm HbH}$, which is performed effectively by means of 2D FFT;
the projectors $P_\Ls$ and $P_\Ks$ are trivially calculated.

Let us consider the peculiarities of the circular case.
First, in the planar case $N_x$ and $N_y$ have only low bounds given by \eqref{eq:NxNyCorrectness}.
In the circular case by $x$ we should take $N_x = T_x$ and in the circular case by $y$ it is necessary to set $N_y = T_y$.
Then the circular topology transfers to the cyclic structure of the circulant.
Note that the projector $\Pmat_{\mathfrak{L}}$ does not depend on either planar or circular case is considered,
while the matrix $\Pmat_{\mathfrak{K}}$ is generally larger for the circular case.

\subsubsection{Rank-one quasi-Hankelization}
Extension of fast Hankelization is also quite straightforward.
The formulas from \cite{Korobeynikov.etal2013} are valid with appropriate choice of $N_x$ and $N_y$ equal
to $T_x$ and/or $T_y$ in the circular case.

\section{Rank of shaped arrays}
\label{sec:shaped_rank}
Following the definition of finite-rank time series \cite{Golyandina.etal2001} and finite-rank 2D arrays
\cite{Golyandina.Usevich2010}, let us generalize the notion of rank to the shaped and circular shaped cases.
Note that generally the theory of infinite arrays of finite rank is closely related to multidimensional arrays satisfying
linear recurrence relations.

\begin{definition}
The \emph{$\Ls$-trajectory space} $\sfS^\Ls$ of a shaped array $\tS \in \spaceR^\Ns$ is defined as
\begin{equation*}
  \sfS^\Ls(\tS) = \sspan \{ (\tS)_{\Ls \oplus \{\KsE\}}  \}_{\KsE \mid \Ls \oplus \{\KsE\} \subset \Ns}.
\end{equation*}
\end{definition}

The introduced trajectory space corresponds to the column space of the trajectory matrix $\bfS=\calT_\mathrm{ShSSA}(\tS)$.
Due to the isomorphism $\cJ_\Ls$, $\{\tP_1,\ldots,\tP_r\}$ is a basis of $\sfS^\Ls(\tS)$ if and only if
 $\{P_1,\ldots,P_r\}$, where $P_k=\cJ_\Ls(\tP_k)$, is a basis of the column space of $\bfS$.
 Let $\{Q_1,\ldots,Q_r\}$  be a basis of the row space of $\bfS$, that is, of the row trajectory space.
Then $\{\tQ_1,\ldots,\tQ_r\}$, where $\tQ_k=\cJ_\Ks^{-1}(Q_k)$, forms the basis of a space, which can be denoted as $\sfS^\Ks(\tS)$.
Thus, we use the terms \emph{column}/\emph{row} \emph{trajectory spaces} and \emph{column}/\emph{row} \emph{shaped trajectory spaces,}
 depending on the context.
\begin{definition}
The \emph{$\Ls$-rank} of a shaped array $\tS$ is defined as dimension of its $\Ls$-trajectory subspace:
\begin{equation*}
  \rank\nolimits_\Ls \tS = \dim \sfS^\Ls(\tS) = \rank \bfS.
\end{equation*}
\end{definition}

\subsection{Ranks of finite and infinite arrays}
In order to describe the general form of arrays of finite rank, we consider infinite arrays (and their trajectory spaces) in both sides for both dimensions (i.e., $\Ns=\spaceZ \times \spaceZ$).
\begin{definition}
Infinite (in both sides) array $\tS_\infty$ is called the \emph{array of finite shaped rank}
if $r = \max\limits_\Ls \rank_\Ls \tS_\infty$, where maximum is taken over $\Ls \subset \{1, \dots, T_x\} \times \{1, \dots, T_y\}$,
is finite.
We will write $\rank \tS_\infty = r$.
\end{definition}

\begin{remark}
The shaped rank for an infinite array is equal to its rectangular rank (when the maximum is taken over rectangular shapes),
since a sequence of shapes that contain increasing
rectangular shapes and are contained in the rectangular shapes can be easily constructed. Therefore, we will say
about arrays of finite rank omitting ``shaped''.
\end{remark}

In practice, the following remarks are important.
\begin{remark}
\label{rem:suffsquare}
If $\tS_\infty$ is an infinite array of finite rank $r$, $\tS = (\tS_\infty)_\Ns$ is a sufficiently large
finite subarray of $\tS_\infty$ and $\Ls$ is a sufficiently large shape, then
\begin{equation}
\label{eq:rank_equality}
%  \rank_\Ls \tS = \rank_\Ks \tS = \rank_\Ls \tS_\infty = \rank_\Ks \tS_\infty = \rank \tS_\infty = r,
  \rank\nolimits_\Ls \tS = \rank\nolimits_\Ls \tS_\infty = \rank \tS_\infty = r,
\end{equation}
Let $\Ks = \{\KsE \mid \Ls \oplus \{\KsE\} \subset \Ns\}$.
The sufficient condition for \eqref{eq:rank_equality}
 (see, e.g. \cite{Golyandina.Usevich2010} for the proof)
is that both $\Ls$ and $\Ks$ shapes
contain at least one $r \times r$ square. More formally, it is sufficient that there exist a 2D index $\alpha=(l, n)$
such that $\{1,\ldots, r\} \times \{1,\ldots, r\} \oplus \{\alpha\} \subset \Ls$; the same should be valid for $\Ks$.
\end{remark}

\subsection{Infinite arrays of finite rank: planar case}

\begin{proposition}
\label{prop:2dfiniterankmodel}
Let $T_1 = T_2 = \infty$.
Then an infinite array $\tS_\infty$ of finite rank has the form
\begin{equation}
  \label{eq:2dfiniterankmodel}
  (\tS_\infty)_{l, n} = \sum_{k = 1}^s P_k(l, n) \mu_k^l \nu_k^n,
\end{equation}
where $\mu_k, \nu_k \in \spaceC$, the pairs $(\mu_k, \nu_k)$ are different,
$P_{k}(l, n)$ are complex polynomials of $l$ and $n$;
such representation is unique up to summands order.
\end{proposition}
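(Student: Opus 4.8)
The plan is to translate the finite-rank hypothesis into the finite-dimensionality of a shift-invariant space of arrays, and then read off the structure \eqref{eq:2dfiniterankmodel} from the joint spectral decomposition of the two shift operators acting on that space.

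First I would introduce the commuting shift operators $\sigma_x$ and $\sigma_y$ on bi-infinite arrays, defined by $(\sigma_x f)_{l,n} = f_{l+1,n}$ and $(\sigma_y f)_{l,n} = f_{l,n+1}$, and set $V = \sspan\{\sigma_x^{a}\sigma_y^{b}\tS_\infty : a,b\in\spaceZ\}$, the smallest shift-invariant subspace containing $\tS_\infty$. The first step is to prove $\dim V = \rank \tS_\infty = r$. For a sufficiently large rectangular window $\Ls$, each generator $(\tS_\infty)_{\Ls\oplus\{\KsE\}}$ of $\sfS^\Ls(\tS_\infty)$ is the restriction to $\Ls$ of a shift of $\tS_\infty$, so restriction induces a surjection $V \to \sfS^\Ls(\tS_\infty)$; the $r\times r$-square condition of Remark~\ref{rem:suffsquare} guarantees that for $\Ls$ large this restriction is also injective, hence an isomorphism, and therefore $\dim V = r$.

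Next I would exploit that $V$ is finite-dimensional and invariant under the two commuting operators $\sigma_x,\sigma_y$, which are invertible (the arrays are two-sided), so all their eigenvalues on $V$ are nonzero. The commutative algebra generated by $\sigma_x$ and $\sigma_y$ then yields a primary decomposition $V = \bigoplus_{k=1}^{s} V_k$ into joint generalized eigenspaces on which $(\sigma_x-\mu_k)^{p_k} = (\sigma_y-\nu_k)^{q_k}=0$, with pairwise distinct joint eigenvalues $(\mu_k,\nu_k)$, all nonzero. I would then identify each $V_k$ explicitly: writing $f\in V_k$ as $f_{l,n}=\mu_k^{l}\nu_k^{n}h_{l,n}$, the two nilpotency conditions become the finite-difference equations $\Delta_l^{p_k}h=0$ and $\Delta_n^{q_k}h=0$, which force $h$ to be a polynomial in $(l,n)$ of degree below $p_k$ in $l$ and below $q_k$ in $n$. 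Decomposing $\tS_\infty\in V$ along $\bigoplus_k V_k$ then gives exactly \eqref{eq:2dfiniterankmodel}. Uniqueness follows because arrays $P_k(l,n)\mu_k^{l}\nu_k^{n}$ with distinct $(\mu_k,\nu_k)$ lie in distinct joint generalized eigenspaces, whose sum is automatically direct, so the grouping of terms by $(\mu_k,\nu_k)$ admits no nontrivial relation.

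The hard part will be the first step, namely justifying that the window-based $\Ls$-rank coincides with $\dim V$ (that a large window faithfully captures the whole shift-invariant space), together with the careful passage from joint nilpotency to the two-variable polynomial form, where the interplay of the commuting nilpotent parts $\sigma_x-\mu_k$ and $\sigma_y-\nu_k$ must be handled. By contrast, the simultaneous primary decomposition of the commuting operators is standard linear algebra once $V$ is known to be finite-dimensional.
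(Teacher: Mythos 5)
Your argument is essentially correct, but the comparison here is not between two proofs: the paper gives no proof of Proposition~\ref{prop:2dfiniterankmodel} at all, it simply records the statement as a known fact with a reference to \cite{Kurakin.etal95JoMS-Linear}. What you have written is, in effect, the linear-algebra rendering of the module-theoretic argument underlying that citation: your space $V$ is exactly the $\spaceC[\sigma_x^{\pm 1},\sigma_y^{\pm 1}]$-module generated by $\tS_\infty$, and your joint generalized eigenspaces are its primary components. So your route supplies a self-contained proof where the paper defers to the literature; the passage from joint nilpotency of $\sigma_x-\mu_k$, $\sigma_y-\nu_k$ to the two-variable polynomial form (via the observation that $\Delta_l^{p_k}h=0$ and $\Delta_n^{q_k}h=0$ force $h$ to be polynomial in each variable with polynomially varying coefficients), and the uniqueness claim via directness of distinct joint generalized eigenspaces, are both handled correctly.

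The one step you must tighten is the assertion that Remark~\ref{rem:suffsquare} gives injectivity of the restriction map $V \to \sfS^{\Ls}(\tS_\infty)$ for large $\Ls$. That remark only says that $\rank\nolimits_{\Ls}\tS_\infty$ stabilizes at $r$; it says nothing about kernels, and a priori you do not even know that $\dim V$ is finite, so you cannot argue by counting dimensions. A direct repair: prove $\dim V \le r$ first. If $f_1,\dots,f_m \in V$ are linearly independent, then the evaluation functionals at points of $\spaceZ\times\spaceZ$ separate the $m$-dimensional space of their linear combinations, so already finitely many points --- hence some sufficiently large window, suitably translated, which is harmless because $V$ is shift-invariant --- carry linearly independent restrictions; since the restriction of $V$ to any window $\Ls$ is exactly $\sfS^{\Ls}(\tS_\infty)$ (restriction is linear and maps generators to generators), this gives $m \le \rank\nolimits_{\Ls}\tS_\infty \le r$, whence $\dim V \le r$. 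Combined with the surjectivity you already noted, which gives $\dim V \ge \rank\nolimits_{\Ls}\tS_\infty = r$ for $\Ls$ as in Remark~\ref{rem:suffsquare}, you obtain $\dim V = r$ and the injectivity for large windows for free. Two further points worth making explicit: you should complexify $V$ (the data are real while $\mu_k,\nu_k,P_k$ are complex, and the primary decomposition needs an algebraically closed field), and it is precisely the invertibility of the shifts on the space of two-sided arrays that guarantees $\mu_k,\nu_k \ne 0$, legitimizing the substitution $f_{l,n}=\mu_k^l\nu_k^n h_{l,n}$. With these repairs your proof is complete.
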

This important fact is well known starting from  \cite[\S 20.20]{Kurakin.etal95JoMS-Linear}.
Note that the rank of the array $\tS_\infty$ given in \eqref{eq:2dfiniterankmodel} does not determined only by degrees of polynomials $P_k$ and is
hardly calculated (see~\cite{Golyandina.Usevich2010}).

An important special case is $(\tS_\infty)_{l,n}=\sum_{k=1}^r A_k \mu_k^l \nu_k^n$, which is widely used in signal processing. In this case, $\rank \tS_\infty$ is equal to the number of different  $(\mu_k, \nu_k)$-pairs.

\subsection{Infinite arrays of finite rank: circular case}
It follows from the definition that in the $T_x$-circular case only arrays that are $T_x$-periodic by the $x$-axis can be of finite rank.
Thus, we obtain two propositions, which characterize the cylindrical and toroidal cases.

\begin{proposition}
Let $T_x < \infty$, $T_y = \infty$.
Then an infinite array $\tS_\infty$ of finite rank $r$  has the form
\begin{equation}
  \label{eq:cylfiniterank}
  (\tS_\infty)_{l, n} = \sum_{k = 1}^s P_k(n) \mu_k^l \nu_k^n,
\end{equation}
where $\mu_k, \nu_k \in \spaceC$, the pairs $(\mu_k, \nu_k)$ are different,
$\mu_k^{T_x} = 1$,
$P_{k}(n)$ are polynomials of $n$ of degree $p_k$, $\sum_k (p_k+1)=r$;
such representation is unique up to summands order.
\end{proposition}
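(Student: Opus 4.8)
The plan is to diagonalize the $x$-direction by a finite Fourier transform and thereby reduce everything to the classical one-dimensional theory. Since the topology is circular in $x$, the text already notes that only arrays that are $T_x$-periodic in $l$ can have finite rank, so I may assume $(\tS_\infty)_{l+T_x,n}=(\tS_\infty)_{l,n}$. For every $T_x$-th root of unity $\mu$ I introduce the one-dimensional Fourier slice
\[ c_\mu(n)=\frac{1}{T_x}\sum_{l=1}^{T_x}(\tS_\infty)_{l,n}\,\mu^{-l}, \]
so that, by the inverse DFT, $(\tS_\infty)_{l,n}=\sum_{\mu}c_\mu(n)\,\mu^{l}$, the sum ranging over the $T_x$ roots of unity. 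The whole argument rests on relating the $\Ls$-rank of $\tS_\infty$ to the ordinary one-dimensional ranks of the slices $c_\mu$.

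First I would establish the \emph{form}. From the splitting worked out below, the full-circle window already gives $\rank\nolimits_\Ls\tS_\infty\ge\dim\sfS^{(y)}(c_\mu)$ for each $\mu$, where $\sfS^{(y)}(c_\mu)$ denotes the one-dimensional trajectory space of the sequence $c_\mu$; hence finiteness of $r$ forces every slice to be a one-dimensional sequence of finite rank. By the classical characterization of finite-rank sequences each slice equals $c_\mu(n)=\sum_{j}P_{\mu,j}(n)\,\nu_{\mu,j}^{\,n}$ with the $\nu_{\mu,j}$ distinct and one-dimensional rank $\sum_j(\deg P_{\mu,j}+1)$. Substituting into the inversion formula yields $(\tS_\infty)_{l,n}=\sum_{\mu,j}P_{\mu,j}(n)\,\mu^{l}\,\nu_{\mu,j}^{\,n}$; relabelling the pairs $(\mu,\nu_{\mu,j})$ as $(\mu_k,\nu_k)$ gives exactly \eqref{eq:cylfiniterank}, with $\mu_k^{T_x}=1$ automatic. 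The pairs are distinct, since for fixed $\mu$ the $\nu_{\mu,j}$ are distinct while different $\mu$ give different first coordinates, and uniqueness follows from invertibility of the DFT together with uniqueness of the one-dimensional exponential-polynomial representation. (Alternatively the form can be read off from Proposition~\ref{prop:2dfiniterankmodel} by imposing $T_x$-periodicity, which forces each $P_k$ to be independent of $l$ and $\mu_k^{T_x}=1$.)

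The crux is the count $\sum_k(p_k+1)=r$, i.e. showing that the $x$-frequencies contribute \emph{independently}, with no cancellation. Here I take the window $\Ls=\{1,\dots,T_x\}\times\{1,\dots,L_y\}$ covering the full circle in $x$, and crucially let the origin $\KsE=(\KsE_x,\KsE_y)$ range over all $\KsE_x\in\{1,\dots,T_x\}$ as well. Writing out the windowed subarray and using $\mu_k^{T_x}=1$, the column indexed by $(\KsE_x,\KsE_y)$ is $\sum_{\mu}\mu^{\,\KsE_x-1}\bigl(E_\mu\otimes g_{\mu,\KsE_y}\bigr)$, where $E_\mu(l)=\mu^{l}$ and $g_{\mu,\KsE_y}(m)=c_\mu(m+\KsE_y-1)$. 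Because the distinct roots of unity $\mu$ form a Vandermonde matrix $[\mu^{\,\KsE_x-1}]$ of full column rank as $\KsE_x$ runs over a whole period, each block $E_\mu\otimes g_{\mu,\KsE_y}$ already lies in the trajectory space; since the $E_\mu$ are linearly independent in $l$, the trajectory space splits as the internal direct sum $\bigoplus_\mu E_\mu\otimes\sfS^{(y)}(c_\mu)$. Hence $\rank\nolimits_\Ls\tS_\infty=\sum_\mu\dim\sfS^{(y)}(c_\mu)=\sum_\mu\sum_j(\deg P_{\mu,j}+1)=\sum_k(p_k+1)$ once $L_y$ is large enough.

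The main obstacle is precisely this direct-sum splitting: one must vary the $x$-origin over a full period so that the Vandermonde in $\KsE_x$ isolates each frequency block. If the origin is fixed the frequencies can collapse (for instance $1^{\,l}\nu^{n}+(-1)^{\,l}\nu^{n}$ looks like a single slice), which would undercount the rank. The only remaining routine point is to confirm that this full-period window attains the maximum over all windows; this follows from the usual sufficiency argument of Remark~\ref{rem:suffsquare}, since \eqref{eq:cylfiniterank} furnishes linear recurrences in $n$ which, together with $T_x$-periodicity in $l$, bound every $\Ls$-trajectory space by $\sum_k(p_k+1)$.
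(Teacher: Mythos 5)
Your proof is correct, but it does not follow the paper's route --- in fact the paper offers no proof of this proposition at all: it merely observes that finite rank in the circular case forces $T_x$-periodicity and then presents \eqref{eq:cylfiniterank} as the specialization of the planar characterization (Proposition~\ref{prop:2dfiniterankmodel}, credited to Kurakin et al.) to periodic arrays, which is exactly the alternative you mention in parentheses. Your DFT-slicing argument is a genuinely different route and, importantly, it proves more than the paper's implicit one: specializing Proposition~\ref{prop:2dfiniterankmodel} under periodicity does yield the \emph{form} (periodicity of $P_k(l,n)\mu_k^l$ in $l$ forces $P_k$ to be constant in $l$ and $\mu_k^{T_x}=1$), but it does not by itself yield the rank count $\sum_k(p_k+1)=r$ --- the paper itself stresses that in the planar case the rank of \eqref{eq:2dfiniterankmodel} is ``hardly calculated'' from the degrees alone. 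Your Vandermonde/direct-sum decomposition of the full-circle trajectory space into $\bigoplus_\mu E_\mu\otimes \sfS^{(y)}(c_\mu)$, obtained by letting the $x$-origin run over a whole period, is precisely the mechanism that reduces the count to the clean one-dimensional formula and rules out frequency collapse; this is the substantive step the paper leaves unjustified. Two small polish points: the one-dimensional characterization you invoke needs $\nu_k\neq 0$ (implicit in the statement as well), and your appeal to Remark~\ref{rem:suffsquare} for the final upper bound is slightly off-label --- what you actually use is that all shifts of $\tS_\infty$ lie in the shift-invariant space spanned by the arrays $n^j\mu_k^l\nu_k^n$, $j\le p_k$, whose dimension is $\sum_k(p_k+1)$, so every $\Ls$-trajectory space, being spanned by restrictions of such shifts, has dimension at most $\sum_k(p_k+1)$; stated that way the bound is immediate and no reference to square-containing shapes is needed.
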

\begin{proposition}
Let $T_x < \infty$, $T_y < \infty$.
Then an infinite array $\tS_\infty$ of finite rank $r$ has the form
\begin{equation}
  \label{eq:cylfiniterank2}
  (\tS_\infty)_{l, n} = \sum_{k = 1}^r A_k \mu_k^l \nu_k^n,
\end{equation}
where $\mu_k, \nu_k \in \spaceC$, the pairs $(\mu_k, \nu_k)$ are different,
$\mu_k^{T_x} = 1$, $\nu_k^{T_y} = 1$, $A_k$ are some constants;
such representation is unique up to summands order.
\end{proposition}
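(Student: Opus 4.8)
The plan is to reduce the toroidal case to the already-established planar structure theorem (Proposition~\ref{prop:2dfiniterankmodel}) and then to read off the two extra constraints ($\mu_k^{T_x}=1$, $\nu_k^{T_y}=1$, and $P_k\equiv\mathrm{const}$) from the double periodicity of the torus. First I would record that, as noted before these propositions, a finite-rank array in a topology that is circular in both coordinates must be $T_x$-periodic in $l$ and $T_y$-periodic in $n$, i.e. $(\tS_\infty)_{l+T_x,n}=(\tS_\infty)_{l,n}$ and $(\tS_\infty)_{l,n+T_y}=(\tS_\infty)_{l,n}$ for all $l,n$. Because $\tS_\infty$ is thus doubly periodic, translating a window by $T_x$ in $l$ or by $T_y$ in $n$ reproduces the same subarray; hence for \emph{every} window shape $\Ls$ there are at most $T_xT_y$ distinct shifted subarrays, so $\rank_\Ls\tS_\infty\le T_xT_y$ uniformly in $\Ls$ and the planar rank is finite. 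Therefore Proposition~\ref{prop:2dfiniterankmodel} applies and gives $(\tS_\infty)_{l,n}=\sum_{k=1}^s P_k(l,n)\,\mu_k^l\nu_k^n$ with distinct pairs $(\mu_k,\nu_k)$, complex polynomials $P_k$, and the representation unique up to the order of summands.

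Next I would exploit periodicity through this uniqueness. Substituting $l\mapsto l+T_x$ and using $T_x$-periodicity yields, after pulling out $\mu_k^{T_x}$, a second planar representation of the \emph{same} array, namely $\sum_{k=1}^s \big(P_k(l+T_x,n)\,\mu_k^{T_x}\big)\mu_k^l\nu_k^n$, whose pairs $(\mu_k,\nu_k)$ coincide with the original ones. Uniqueness then forces the coefficient polynomials to match termwise: $P_k(l+T_x,n)\,\mu_k^{T_x}=P_k(l,n)$ for every $k$. Comparing the coefficients of the highest power of $l$ on both sides (a polynomial in $n$ that is not identically zero) gives $\mu_k^{T_x}=1$; the relation then collapses to $P_k(l+T_x,n)=P_k(l,n)$, so $P_k$ is periodic in $l$ and hence independent of $l$. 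The symmetric argument in the second coordinate gives $\nu_k^{T_y}=1$ and independence of $P_k$ of $n$. Thus each $P_k$ is a constant $A_k$.

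Finally, with all $P_k\equiv A_k$ the array becomes $\sum_{k=1}^s A_k\mu_k^l\nu_k^n$; by the special case recorded just after Proposition~\ref{prop:2dfiniterankmodel}, the rank of such an array equals the number of distinct pairs, so $s=r$ and the sum indeed runs to $r$. Uniqueness up to summand order is inherited directly from the planar uniqueness. I expect the main obstacle to be the reduction step rather than the algebra: one must argue cleanly that finite \emph{shaped} rank on the torus forces double periodicity and hence finite \emph{planar} rank, so that the unbounded-window structure theorem is legitimately available. Once the planar form is in hand, the leading-coefficient comparison and the passage from a polynomial periodic in one variable to a constant are routine; alternatively, one could bootstrap from the preceding cylindrical proposition and apply the periodicity argument only in the second coordinate.
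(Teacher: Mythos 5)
Your proof is correct and follows essentially the route the paper intends: the paper states this proposition without a detailed proof, deriving it from the preceding remark that finite rank in a circular topology forces periodicity together with the planar structure theorem (Proposition~\ref{prop:2dfiniterankmodel}), and your argument fills in exactly those steps (double periodicity $\Rightarrow$ finite planar rank $\Rightarrow$ planar form, then uniqueness plus leading-coefficient comparison to force $\mu_k^{T_x}=\nu_k^{T_y}=1$ and constant $P_k$, and the rank count $s=r$ from the special case noted after Proposition~\ref{prop:2dfiniterankmodel}). Note only that your reduction inherits the paper's own informality --- the claim that finite shaped rank on the torus forces double periodicity is asserted rather than proved in both --- but you correctly flag this as the delicate step.
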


\begin{remark}
\label{rem:cyclic_f}
In the toroidal case, $T_x < \infty$, $T_y < \infty$, an array of finite rank $r$ can be presented in the form
of sum of $r$ components of the Discrete Fourier Transform.
\end{remark}

\section{Shaped ESPRIT}
\label{sec:shaped_esprit}
The well known ESPRIT \cite{Roy.Kailath1989} and 2D-ESPRIT \cite{Rouquette.Najim01Estimation} methods are used for estimation of parametes in the model  $\tX=\tS+\tR$, where $\tR$ is noise and
$\tS$ is the signal of finite rank, which is of the form $(\tS)_{l}=\sum_{k=1}^r A_k \mu_k^l$ (for ESPRIT) and $(\tS)_{l,n}=\sum_{k=1}^r A_k \mu_k^l \nu_k^n$ (for 2D-ESPRIT).

Next, we briefly describe the general scheme of ESPRIT-like methods, on the example of 2D-arrays.
\subsubsection{Estimation of the signal subspace}
We assume that the signal is at least approximately separated from the residual. In this case, $r$ left singular vectors $\{U_{i_1}, \dots, U_{i_r}\}$ of the trajectory matrix (and corresponding shaped arrays $\{\tU_{i_1}, \dots, \tU_{i_r}\}$, where $\tU_{i_k} = \cJ_\Ls^{-1}(U_{i_k})$)  approximate the column signal subspace. (In signal processing, typically, first $r$ components are chosen, i.e. $ \{i_1, \dots, i_r\} = \{1,\ldots,r\}$.)
\subsubsection{Construction of shifted matrices}
The algorithm uses the notion of shifts for shaped arrays from
the space spanned by $\{\tU_{i_1}, \dots, \tU_{i_r}\}$. Directions of shifts corresponds to the
directions in the initial space. Thus, four matrices $\bfP_x$, $\bfQ_x$, $\bfP_y$, $\bfQ_y$ are constructed, which columns span the shifted subspaces. For 2D-ESPRIT this step was already considered in
\cite{Rouquette.Najim01Estimation}. For Shaped ESPRIT, we define it in Section~\ref{sec:sh_construct}.
\subsubsection{Construction of shift matrices}
 The relation between the shifted subspaces allows to estimate the so called shift matrices of order $r \times r$ as the approximate solutions of $\bfP_x \bfM_x \approx \bfQ_x$ and $\bfP_y \bfM_y \approx \bfQ_y$. The method of estimation is based on least squares (LS-ESPRIT) or total least squares (TLS-ESPRIT).

\subsubsection{Estimation of the parameters}
The eigenvalues of the shift matrices provide estimates of the parameters: the eigenvalues of the $x$-direction shift matrix provide
estimates of $\mu_k$, while the eigenvalues of the $y$-direction shift matrix produce estimates of $\nu_k$. However, the method should
provide estimates of the pairs $(\mu_k,\nu_k)$.
There are different approaches for pair estimation: simultaneous diagonalization of shift matrices (see \cite{Rouquette.Najim01Estimation}) 
and a pairing of independently calculated eigenvalues of two shift matrices (the latter method is called 2D-MEMP \cite{Rouquette.Najim01Estimation}
and is improved in \cite{Wang.etal05IToSP-Comments}).

%by a kind of simultaneous diagonalization.
%Therefore, the estimation of pairs is performed by a so-called pairing procedure described
%in \cite{Rouquette.Najim01Estimation}. The modification called 2D-MEMP with
%improved pairing step is presented in \cite{Rouquette.Najim01Estimation,Wang.etal05IToSP-Comments}.

\subsection{Construction of shifted matrices}
\label{sec:sh_construct}
Let us introduce the shaped version of the ESPRIT method and describe its specific features.
A difference of the shaped version from the rectangular planar one consists in a special shift construction for shapes at the second step
of the ESPRIT scheme. (Other steps are exactly the same as in
the planar versions.) Also, the circular topology influences the algorithm.

\begin{algorithm}[Construction of shifted matrices]
\label{alg:ESPRIT}

\noindent
\textit{Input}: \\
1. The topology of the initial 2D array parameterized by $(T_x, T_y)$; \\
2. the 2D shape $\Ls$ equipped with the topology of the initial object;\\
3. a basis $\tU_1, \dots, \tU_r \in \spaceR^\Ls$
of the estimated signal subspace.
For example, $\{\tU_k\}_{k = 1}^r$ are $\Ls$-shaped eigenvectors from the chosen eigentriples
of the trajectory matrix identified as related to an image component that one wants to analyze.

\smallskip
\noindent
\textit{Output}: the matrices $\bfP_x$, $\bfQ_x$, $\bfP_y$, $\bfQ_y$.

Let us show how to construct the $x$-shifted matrix, that is, the matrix
obtained by the shift along the $x$ axis; the case of the shift along the $y$ axis
is analogous.

1. Find the subset of points from $\Ls$ that have adjacent points of $\Ls$ from the right in the $x$-direction:
\begin{equation}
\label{eq:Msx}
  \Ms_x = \{ \LsE \in \Ls \mid \LsE \oplus (2, 1) \in \Ls  \}
\end{equation}
Recall that the sum $\oplus$ with 1 corresponds to no shift.

2. Consider the shaped eigenarrays restricted on $\Ms_x$ and on $\Ms_x \oplus \{(2, 1)\}$:
\begin{equation}
  \label{eq:tPtQ}
  \tP_k = (\tU_k)_{\Ms_x}, \quad \tQ_k = (\tU_k)_{\Ms_x \oplus \{(2, 1)\}},\  k=1,\ldots,r.
\end{equation}

3. Vectorize
 the arrays $\tP_k$ and  $\tQ_k$  into $P_k$ and $Q_k$ correspondingly.

4. Stack the obtained vectors to the matrices
\begin{equation*}
  \bfP_x= [P_1, : \dots : P_r], \quad \bfQ_x = [Q_1 : \dots : Q_r].
\end{equation*}
\end{algorithm}

\smallskip
Thus, the shifted matrices are constructed and the shift matrices $\bfM_x$ and $\bfM_y$ can be estimated from the relations
\begin{equation}
  \label{eq:ESPRITapprox}
  \bfP_x \bfM_x \approx \bfQ_x, \quad \bfP_y \bfM_y \approx \bfQ_y.
\end{equation}

\begin{remark}
Note that for the toroidal rectangular case the application of ESPRIT has little sense, since the estimation
of frequencies of a finite-rank toroidal signal can be performed by Fourier analysis, see
Remark~\ref{rem:cyclic_f}.
\end{remark}

\begin{remark}
We have considered the version of ESPRIT based on the column space.
The version based on the row space can be also considered.
\end{remark}

\subsection{Conditions of the algorithm correctness}
Let a $\Ns$-shaped array $\tS$ of rank $r$ have the common term in the form $(\tS)_{l,n} = \sum_{k = 1}^r A_k \mu_k^l \nu_k^n$
for a set of different pairs $\{(\mu_k, \nu_k)\}_{k = 1}^r$,
$\mu_k, \nu_k \in \spaceC$, and noise $\tR$ is zero.
The following proposition forms the basis of the ESPRIT-type methods (see e.g. \cite{Rouquette.Najim01Estimation})
and provides the conditions, when Algorithm~\ref{alg:ESPRIT} and the ESPRIT scheme as a whole are correct, that is, the equations~\eqref{eq:ESPRITapprox}
 have exact solutions
and the constructed shift matrices produce exactly the pairs $(\mu_k, \nu_k)$ by exact joint diagonalization at the last step of the scheme.
\begin{proposition}
\label{prop:esprit_exist}
Let a window $\Ls$ can be chosen such that the forms $\Ms_x$ and
$\Ms_y$ given in \eqref{eq:Msx} be nonempty and $\rank_{\Ms_x} \tS = \rank_{\Ms_y} \tS = \rank_\Ls \tS = r$.
Denote $\{\tU_k\}_{k=1}^r$,  $\tU_k\in \spaceR^\Ls$, a basis of $\sfS^\Ls(\tS)$.
Then \\
(1) the equalities $\bfP_x \bfM_x = \bfQ_x$ and $\bfP_y \bfM_y = \bfQ_y$ (see \eqref{eq:ESPRITapprox})
have unique exact solutions
$\bfM_x$ and $\bfM_y$;\\
(2) there exist such a matrix $\bfT \in \spaceC^{r \times r}$ that
$\bfM_x = \bfT \diag(\mu_1, \dots, \mu_r) \bfT^{-1}$
and $\bfM_y = \bfT \diag(\nu_1, \dots, \nu_r) \bfT^{-1}$.
\end{proposition}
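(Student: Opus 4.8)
The plan is to exploit the explicit exponential form of the signal so that a single change of basis simultaneously diagonalizes both shift matrices. First I would introduce, for each $k=1,\dots,r$, the elementary $\Ls$-shaped exponential array $\tW_k$ with $(\tW_k)_{\LsE}=\mu_k^{\LsE_x}\nu_k^{\LsE_y}$, and show that $\sfS^\Ls(\tS)=\sspan\{\tW_1,\dots,\tW_r\}$. Substituting the common term into a generic window column and using $\LsE\oplus\KsE=(\LsE_x+\KsE_x-1,\LsE_y+\KsE_y-1)$ in the planar case gives $(\tS)_{\Ls\oplus\{\KsE\}}=\sum_k\big(A_k\mu_k^{\KsE_x-1}\nu_k^{\KsE_y-1}\big)\tW_k$, so every column of the trajectory matrix lies in $\sspan\{\tW_k\}$ and hence $\sfS^\Ls(\tS)\subseteq\sspan\{\tW_k\}$. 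Since $\dim\sfS^\Ls(\tS)=\rank_\Ls\tS=r$ and there are only $r$ arrays $\tW_k$, this inclusion is an equality and the $\tW_k$ form a basis. Collecting vectorizations into $\bfW_\Ls=[\cJ_\Ls\tW_1:\dots:\cJ_\Ls\tW_r]$ and $\bfU_\Ls=[\cJ_\Ls\tU_1:\dots:\cJ_\Ls\tU_r]$, both matrices have rank $r$ and the same column space, so there is a unique invertible $\bfT\in\spaceC^{r\times r}$ with $\bfW_\Ls=\bfU_\Ls\bfT$, i.e.\ $\bfU_\Ls=\bfW_\Ls\bfT^{-1}$.

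The second ingredient is the shift property of the exponential basis. For $\LsE\in\Ms_x$ one has $\LsE\oplus(2,1)=(\LsE_x+1,\LsE_y)$, hence $(\tW_k)_{\LsE\oplus(2,1)}=\mu_k(\tW_k)_{\LsE}$. Writing $\bfW_P$ and $\bfW_Q$ for the restrictions of $\bfW_\Ls$ to $\Ms_x$ and to $\Ms_x\oplus\{(2,1)\}$ (with positions matched through the $(2,1)$-shift, which preserves lexicographic order), this property reads columnwise as $\bfW_Q=\bfW_P\bfD_x$ with $\bfD_x=\diag(\mu_1,\dots,\mu_r)$. Because restriction is linear and commutes with the fixed basis change, the matrices produced by Algorithm~\ref{alg:ESPRIT} are $\bfP_x=\bfW_P\bfT^{-1}$ and $\bfQ_x=\bfW_Q\bfT^{-1}=\bfW_P\bfD_x\bfT^{-1}$. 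The same computation in the $y$-direction yields $\bfP_y=\bfW_{P_y}\bfT^{-1}$ and $\bfQ_y=\bfW_{P_y}\bfD_y\bfT^{-1}$ with $\bfD_y=\diag(\nu_1,\dots,\nu_r)$, and crucially with the very same $\bfT$, since $\bfT$ comes from the single basis change on $\Ls$ and is independent of the shift direction.

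Next I would convert the rank hypotheses into full-column-rank statements. Exactly as in the first paragraph, $\sfS^{\Ms_x}(\tS)\subseteq\sspan\{(\tW_k)_{\Ms_x}\}$, so $r=\rank_{\Ms_x}\tS\le\rank\bfW_P\le r$ and therefore $\bfW_P$ has full column rank $r$; likewise $\bfW_{P_y}$. Consequently $\bfP_x=\bfW_P\bfT^{-1}$ has full column rank and is left-cancellable. Substituting $\bfM_x=\bfT\bfD_x\bfT^{-1}$ into $\bfP_x\bfM_x$ and cancelling gives $\bfQ_x$, so this is an exact solution, and it is the only one because $\bfP_x\bfN=0$ forces $\bfN=0$; the $y$-direction is identical. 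This establishes part~(1). Part~(2) is then immediate, since with the single matrix $\bfT$ we have simultaneously $\bfM_x=\bfT\diag(\mu_1,\dots,\mu_r)\bfT^{-1}$ and $\bfM_y=\bfT\diag(\nu_1,\dots,\nu_r)\bfT^{-1}$.

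The main obstacle is conceptual rather than computational: one must pin down that one and the same $\bfT$ conjugates both shift matrices to diagonal form, which works precisely because $\bfT$ is read off from the direction-independent basis change $\bfU_\Ls=\bfW_\Ls\bfT^{-1}$ on the full window, while the $\Ms_x$- and $\Ms_y$-restrictions only left-multiply by selection maps and leave $\bfT$ untouched. The point demanding care is the full-column-rank step, where the hypotheses $\rank_{\Ms_x}\tS=\rank_{\Ms_y}\tS=r$ are exactly what keep the restricted exponential matrices non-degenerate. In the circular topology the same argument goes through verbatim once the finite-rank constraints $\mu_k^{T_x}=1$ (and $\nu_k^{T_y}=1$) are used to keep the shift property $(\tW_k)_{\LsE\oplus(2,1)}=\mu_k(\tW_k)_{\LsE}$ valid across the wrap-around.
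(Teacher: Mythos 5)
Your proof is correct and takes essentially the same route as the paper's: a change of basis $\bfT$ to the exponential arrays (the paper's $\tU'_k$, your $\tW_k$), the shift identity that makes the shift matrix diagonal in that basis, and the hypotheses $\rank\nolimits_{\Ms_x} \tS = \rank\nolimits_{\Ms_y} \tS = r$ converted into full column rank of $\bfP_x$, $\bfP_y$ to obtain existence and uniqueness. You merely spell out details the paper leaves implicit (that the exponentials span $\sfS^\Ls(\tS)$, the row-matching under the shift, and the wrap-around in the circular case), which only strengthens the argument.
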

\begin{proof}
Take the matrix $\bfT$ as a matrix of transformation of a basis $\{ \tU_k \}_{k=1}^r$ to the basis
$\{\tU'_k\}_{k=1}^r$, where
\begin{equation*}
  (\tU'_k)_{l, n} = \mu_k^l\nu_k^n.
\end{equation*}
We prove both statements of the proposition for $\bfM_x$;
for the other direction the proof is the same.
If we consider the  basis $\tU'_k$ instead of $\tU_k$ and perform Algorithm~\ref{alg:ESPRIT}, then the corresponding shifted matrices
$\bfP'_x$ and $\bfQ'_x$ can be represented as follows:
\begin{equation*}
  \bfP'_x = \bfP_x \bfT, \quad \bfQ'_x = \bfQ_x \bfT,
\end{equation*}
since construction of the shifted matrices is a linear procedure.
Obviously, the matrix $\bfM'_x = \diag(\mu_1, \dots, \mu_r)$ will be an exact solution of~\eqref{eq:ESPRITapprox}, $\bfP'_x \bfM'_x = \bfQ'_x$.
Therefore, $\bfP_x \bfT \bfM'_x = \bfQ_x \bfT$ and hence
  $\bfP_x \bfT \bfM'_x \bfT^{-1}= \bfQ_x$.

Now let us prove that the solution $\bfM_x$ is unique.
The condition of uniqueness for the solutions of the equalities~\eqref{eq:ESPRITapprox} is
$\rank \bfP_x = r$.
As before, $(\tU)_{\Ms_x}$ is the restriction of the shaped array $\tU$ to the shape $\Ms_x$.
Then
\begin{gather*}
  \rank \bfP_x = \dim\sspan\{P_1, \dots, P_r\} = \\ =
  \dim\sspan\{(\tU_1)_{\Ms_x}, \dots, (\tU_r)_{\Ms_x} \} = \\ =
  \dim\sspan\{(\tU'_1)_{\Ms_x}, \dots, (\tU'_r)_{\Ms_x} \} = \\ =
  \dim\sfS^{\Ms_x}(\tS) = \rank\nolimits_{\Ms_x} \tS.
\end{gather*}
\end{proof}

The following proposition provides sufficient conditions
for the correctness of the Shaped 2D ESPRIT algorithm for both
planar and circular versions.

\begin{proposition}
\label{cor:ESPRITsuff}
Let
the shape $\Ls$ contain at least one $(r + 1) \times (r + 1)$ square and
the shape $\Ks = \{\KsE \mid \Ls \oplus \{\KsE\} \subset \Ns\}$ contain at least one
$r \times r$ square.
Then the conditions of Proposition \ref{prop:esprit_exist} are valid.
\end{proposition}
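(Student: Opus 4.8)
The plan is to verify both hypotheses of Proposition~\ref{prop:esprit_exist} directly from the square-containment assumptions, using Remark~\ref{rem:suffsquare} as the tool that converts geometric containment into the required rank equalities. Since $\tS$ has common term $(\tS)_{l,n}=\sum_{k=1}^r A_k\mu_k^l\nu_k^n$, it is the restriction to $\Ns$ of an infinite array $\tS_\infty$ of rank $r$; hence Remark~\ref{rem:suffsquare} applies to $\tS$ with any sufficiently large window. It therefore suffices to check that each of the three windows $\Ls$, $\Ms_x$, $\Ms_y$, together with its associated set of admissible origins, contains an $r\times r$ square.

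First I would show that $\Ms_x$ and $\Ms_y$ each contain an $r\times r$ square (and are in particular nonempty). Write the assumed $(r+1)\times(r+1)$ square in $\Ls$ as $\{1,\dots,r+1\}\times\{1,\dots,r+1\}\oplus\{\alpha\}$ for a suitable origin $\alpha$. In $0$-indexed coordinates the operation $\oplus$ is just coordinatewise addition modulo $(T_x,T_y)$ (with $a\mmod\infty=a$), so it is associative and commutative; consequently, for any square point $(i,j)\oplus\alpha$ with $i\le r$ one has $((i,j)\oplus\alpha)\oplus(2,1)=(i+1,j)\oplus\alpha$, which again lies in the square and hence in $\Ls$. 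By the definition~\eqref{eq:Msx} this places $(i,j)\oplus\alpha$ in $\Ms_x$, so $\{1,\dots,r\}\times\{1,\dots,r+1\}\oplus\{\alpha\}\subset\Ms_x$, which contains an $r\times r$ square. The symmetric computation with the shift $(1,2)$ gives the analogous $(r+1)\times r$ block inside $\Ms_y$. This step is identical in the planar and circular cases, since the potential wrap-around is already built into $\oplus$ and the square is described intrinsically through $\oplus$ rather than by raw coordinates.

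Next I would supply the origin squares demanded by Remark~\ref{rem:suffsquare} for each window. For $\Ls$ this is immediate: $\Ls$ contains an $r\times r$ square (being even $(r+1)\times(r+1)$) and $\Ks$ contains one by hypothesis, so $\rank_\Ls\tS=r$. For $\Ms_x$ I would exploit $\Ms_x\subset\Ls$: if $\Ls\oplus\{\KsE\}\subset\Ns$ then $\Ms_x\oplus\{\KsE\}\subset\Ls\oplus\{\KsE\}\subset\Ns$, whence $\Ks$ is contained in the origin set $\{\KsE\mid\Ms_x\oplus\{\KsE\}\subset\Ns\}$. This origin set therefore inherits the $r\times r$ square from $\Ks$, and together with the $r\times r$ square just found inside $\Ms_x$, Remark~\ref{rem:suffsquare} yields $\rank_{\Ms_x}\tS=r$. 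The same argument applied to $\Ms_y$ gives $\rank_{\Ms_y}\tS=r$.

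Assembling the three equalities gives $\rank_{\Ms_x}\tS=\rank_{\Ms_y}\tS=\rank_\Ls\tS=r$ with $\Ms_x$ and $\Ms_y$ nonempty, which is exactly the hypothesis of Proposition~\ref{prop:esprit_exist}, completing the argument. I do not expect a serious obstacle; the only point that needs care is the first step, verifying that the unit right-shift carries a full $r\times(r+1)$ sub-block of the $(r+1)\times(r+1)$ square back into $\Ls$ and that this survives circular wrap-around — which it does precisely because the square and the shift are both expressed through the modular operation $\oplus$.
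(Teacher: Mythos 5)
Your proof is correct and follows essentially the same route as the paper's: both reduce the claim to Remark~\ref{rem:suffsquare} by observing that the $(r+1)\times(r+1)$ square in $\Ls$ yields $r\times r$ squares inside $\Ms_x$ and $\Ms_y$, which together with the hypothesis on $\Ks$ gives $\rank_{\Ms_x}\tS=\rank_{\Ms_y}\tS=\rank_\Ls\tS=r$. The only difference is that you spell out what the paper dismisses as obvious — the unit-shift argument surviving circular wrap-around, and the fact that the origin set for $\Ms_x$ contains $\Ks$ because $\Ms_x\subset\Ls$ — which makes your write-up a more complete version of the same argument.
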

\begin{proof}
Obviously, if the shape $\Ls$ contains an $(r + 1) \times (r + 1)$ square, then the
shapes $\Ms_x$ and $\Ms_y$ contain at least one $r \times r$ square, which is (along with the condition on the shape $\Ks$)
the sufficient condition
for $\rank_{\Ms_x} \tS = \rank_{\Ms_y} \tS = \rank_\Ls \tS = \rank \tS = r$ (see Remark~\ref{rem:suffsquare}).
\end{proof}

\section{Examples}
\label{sec:examples}
In this section we consider two examples, for demonstration of Shaped 2D-ESPRIT and of
cylindric Shaped 2D-SSA. All the examples are reproducible, and the code in \pkg{R} is available on request.
\subsection{Shaped 2D-ESPRIT}
Let us describe the example of application of the shaped version of 2D-ESPRIT.
We take a part of the commonly used image ``Barbara'' $512\times 512$ (see e.g. \cite{McGuire2011Data})
containing the table covered by checkerboard cloth.
However, due to edges of the table, the tablecloth consists of three parts with different views of the texture.
Let us consider this picture to demonstrate how Shaped 2D-SSA together with Shaped 2D-ESPRIT can help to
analyze textures in different complex-shaped areas of the image.

We construct masks for three parts of the tablecloth and consider them separately (see Figure~\ref{fig:barbara_snippets}).
The first two subimages are triangular, while the third one has complex-shaped boundary and is non-convex.
Since the first two subimages are convex, we can choose the window of the same shape as the subimage (that is,
a triangular window), but of smaller size ($4$ times smaller for subimage 1 and $9$ times smaller for subimage 2).
For the third subimage of complex shape with a gap, we choose the circle window shape of radius $10$.
Since the amplitudes of the texture components are not constant, we choose not large windows, which are though sufficient
for extraction of the texture.

Checkerboard form of the texture means that it consists of two sine planar waves that run to different directions.
This means that the main part of texture can be extracted by means of four SSA components, each pair corresponds to a separate angle.
Checkerboard texture is easily seen on the first subimage. It is not evident on the second one and is not seen at all on the third subimage.

Analysis of the eigenarrays helps to indicate which eigenarrays correspond to the texture. For example, the third image contains only one-direction strips
on the eigenarray images.
Therefore we choose four eigentriples ET2--5 for the first and the second subimages and two eigentriples ET2,3 for the third one. Figure~\ref{fig:barbara_eigenvalues} contains the result of
reconstruction (top), where the reconstructions of separate strips are depicted under the full reconstruction (for the subimage 3 we have only
one strip and therefore the picture of the second strip is empty).

\begin{figure}[thpb]
  \centering
  \includegraphics{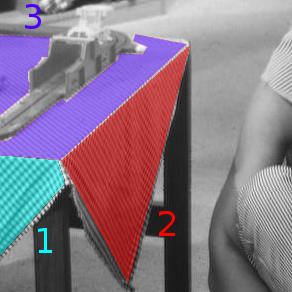}
  \caption{Barbara's table with enumerated subimages}
  \label{fig:barbara_snippets}
\end{figure}

\begin{figure}[thpb]
  \centering
  \includegraphics[width=\columnwidth]{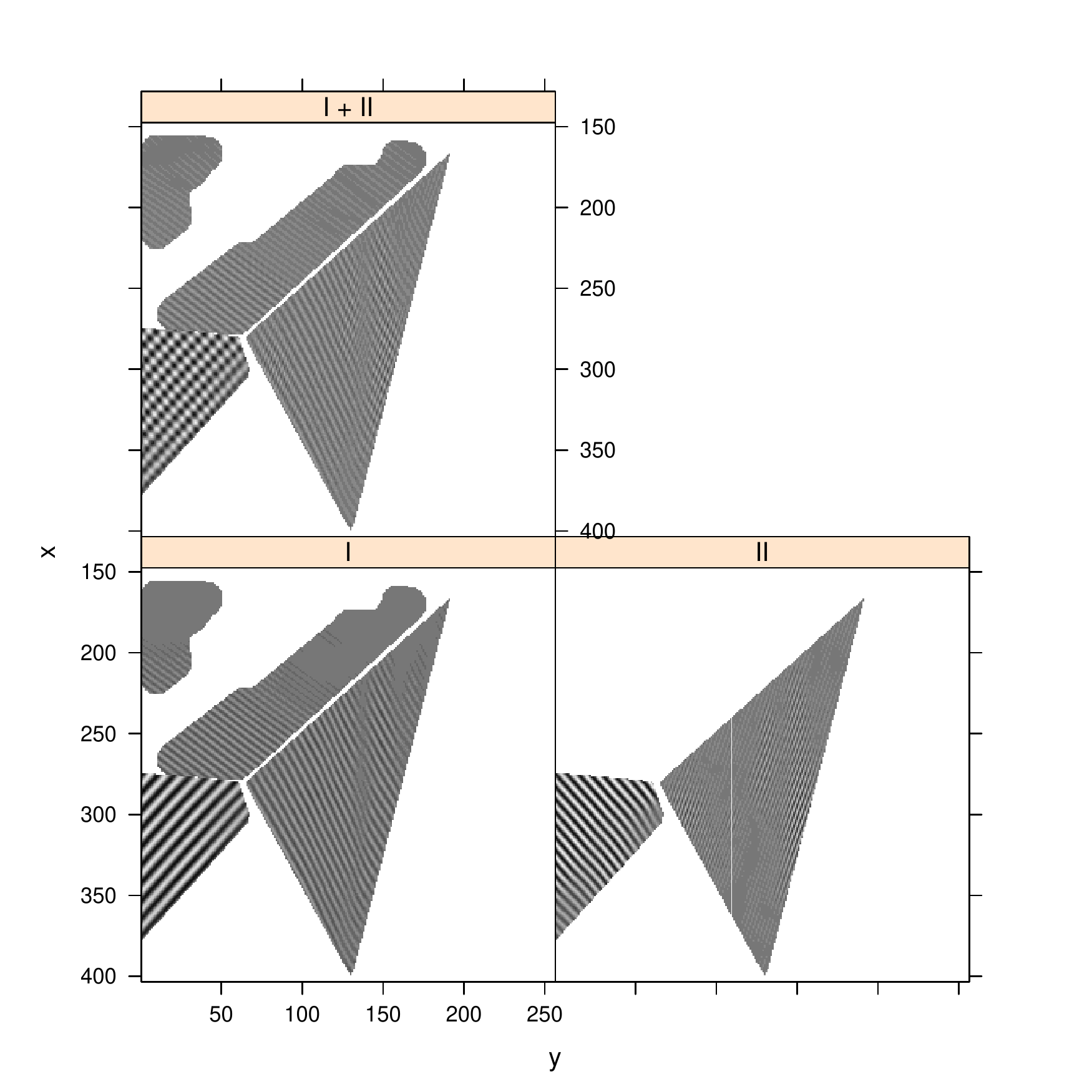}
  \caption{Barbara's table: reconstructions of texture}
  \label{fig:barbara_eigenvalues}
\end{figure}

The frequency estimations obtained by the column LS 2D-ESPRIT are presented in Table~\ref{t:freqs}.
The estimated angles (in degrees) and widths of the strips are also presented. The exponential rates show how rapidly the amplitudes are changed.

The coordinate system is as follows: the $x$ coordinate is measured from top to bottom, while the $y$ coordinate
is measured from left to right.

The relation between periods $t_x$ and $t_y$ allows one to find the angle of strips with the horizontal line by the formula
$\alpha = \arctan(t_x / t_y)$.
The strips on the subimage 1 are approximately orthogonal and have angles near 45$^\circ$. The strips on the subimage 2
are closer to vertical (angles 63$^\circ$ and 70$^\circ$), while that on the image 3 are closer to horizontal.
Note that widths of the strips can be found by the formula $w = t_x \cos\alpha = t_x t_y/\sqrt{t_x^2 + t_y^2}$.

Let us look at the exponential rates. Positive rates correspond to increasing of amplitudes in the direction from the left top corner.
Therefore, for example, the decreasing of the amplitudes of the subimage 3 from left-bottom to right-top means that the $x$-rate is positive,
while the $y$-rate is negative.

\begin{table}[h]
\centering
\caption{Barbara, ESPRIT periods and rates}
\label{t:freqs}
\begin{tabular}{
    ||
    c
    |
    S[table-format=-2.1]
    |
    S[table-format=-2.1]
    |
    S[table-format=-2.1]
    |
    S[table-format=2.0]
    |
    S[table-format=-1.4]
    |
    S[table-format=-1.4]
    ||}
\hline
{$\mathrm{Part}$} & {$t_x$} & {$t_y$} & {$\mathrm{Width}$} & {$\alpha, ^\circ$} & {$\mathrm{Rate}_x$} & {$\mathrm{Rate}_y$} \\
\hline
1    & 8.1      &  6.9     & 5.2   & 50    & -0.0066 & -0.0061 \\
1    & 11.0     &  10.1    & 7.4   & 47    & -0.0022 & -0.0064 \\
2    & 9.7      &  5.0     & 4.4   & 63    & -0.0031 & -0.0075 \\
2    & 7.2      &  2.6     & 2.5   & 70    & 0.0025  & 0.017 \\
3    & 5.1      &  6.8     & 4.1   & 37    & 0.0054  & -0.0046 \\
\hline
\end{tabular}
\end{table}

\subsection{Cylindric Shaped 2D-SSA}
Let us consider the data for expression of gene activity measured on the embryo of the drosophila (fruit fly).
The form of the embryo is similar to ellipsoid and therefore the cylindrical projection is adequate for
a middle part of the embryo. The technique of SSA analysis of such kind of 2D data by planar non-shaped 2D-SSA
was developed in \cite{Golyandina.etal2012a, Holloway.etal2011}.
After projection, the initial data given on 3D embryo surfaces are represented in the form $(x_i, y_i, f_i)$, where $(x_i, y_i)$ are coordinates of nuclei centers in
the cylindric
projection transformed to a planar rectangle,
$f_i$ are the expression values. The aim of the analysis is to decompose the data into the sum of
a pattern and noise and then measure the signal/noise ratio. The data are cropped and
then are interpolated to a regular grid; therefore, we obtain a digital image, which
can be processed by 2D-SSA. After that, the values of the smoothed expression are interpolated backward onto
nuclei centers.

\begin{figure}[thpb]
  \centering
  \includegraphics[width=\columnwidth]{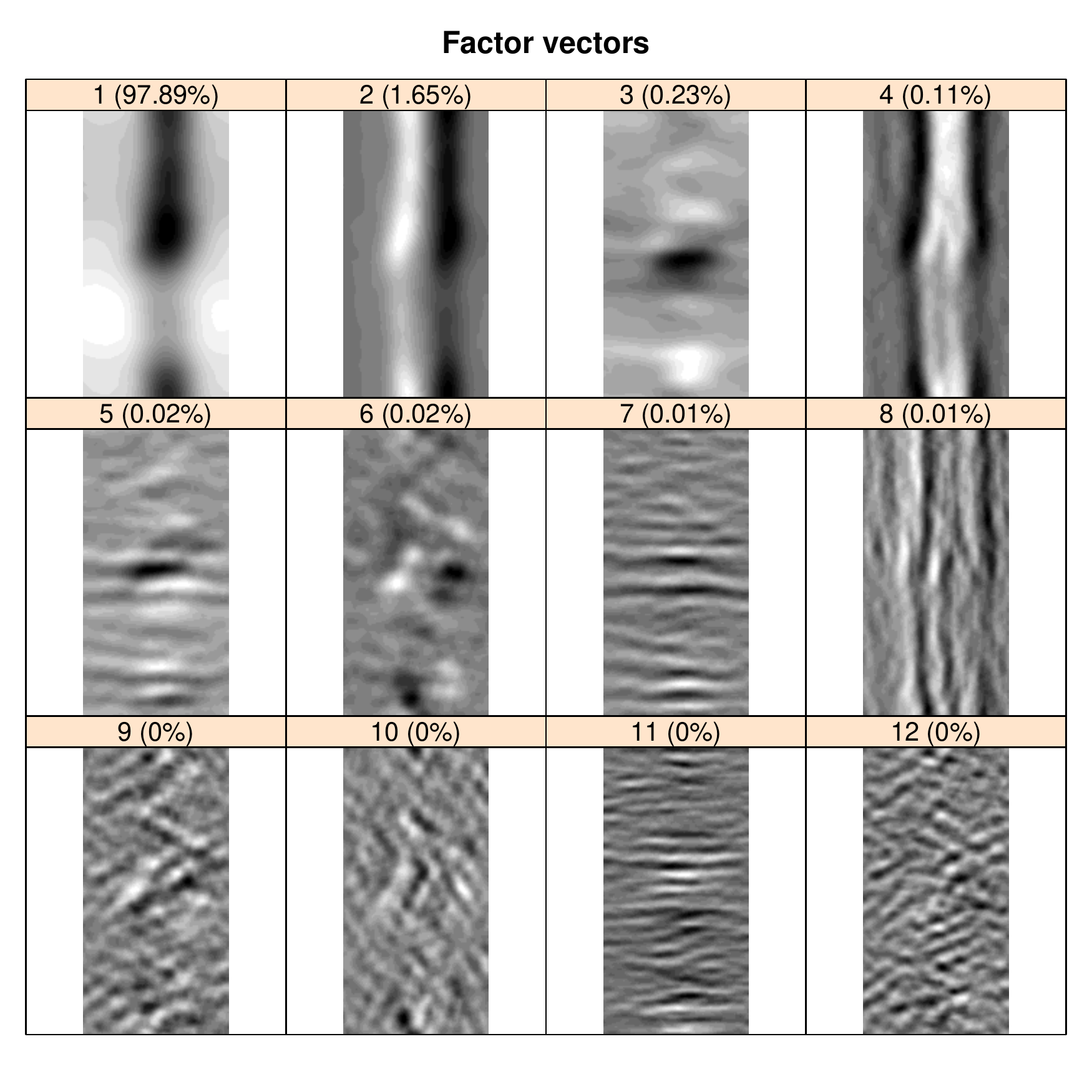}
  \caption{``Kr\"uppel'': factor vectors}
  \label{fig:drosophila_factor}
\end{figure}

\begin{figure}[thpb]
  \centering
  \includegraphics[width=\columnwidth]{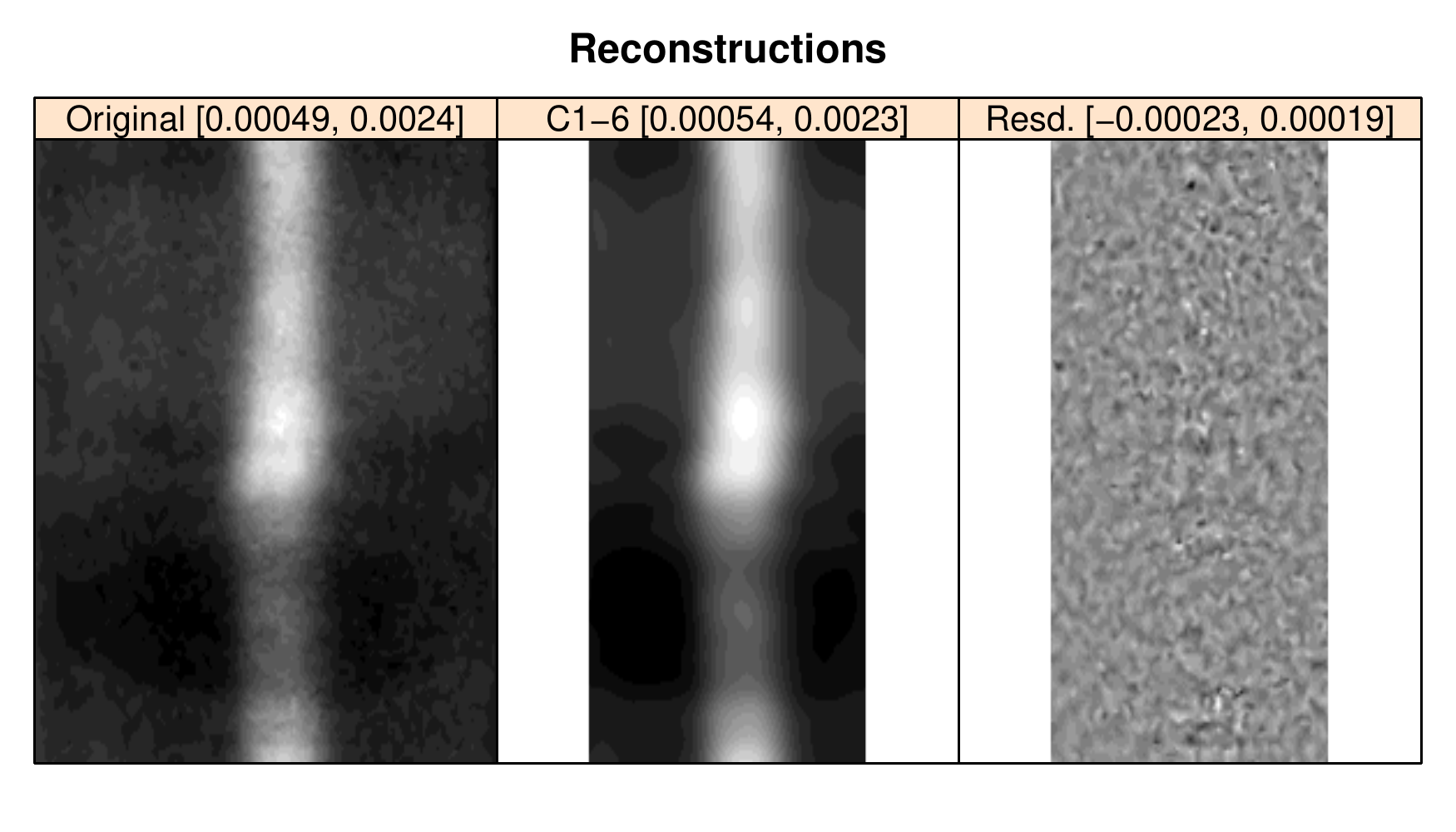}
  \caption{``Kr\"uppel'': original image, reconstruction and residual}
  \label{fig:drosophila_reconstruction}
\end{figure}

Here we demonstrate that the developed circular version of 2D-SSA can perform smoothing without
artificial transformation of the cylinder to a rectangle. Thus the result of processing
does not depend on the technique of this transformation and has no extra edge effects. We consider the cylindrical topology ($T_x = N_x$ and $T_y = \infty$).

The data are downloaded from the BDTNP archive \cite{BDTNP},
the file \texttt{``v5\_s11643-28no06-04.pca''}, gene ``Kr\"uppel''.
The data on the surface of embryo are given in 3D coordinates, therefore we calculated cylindric coordinates
by finding the principal axis of the data. The interpolation is performed with the step 0.5\% and the
middle part from 20 to 80\% of the embryo lengths were processed.

The result of decomposition (the factor vectors from the $12$ leading eigentriples) is depicted in
Figure~\ref{fig:drosophila_factor}, where contributions of the corresponding eigentriples are indicated in the headers.

The components 1--6 grouped together provide an adequate smoothing; the residual oscillates around zero, see Figure~\ref{fig:drosophila_reconstruction}
(the maximal and minimal values of the reconstructed components are shown in the headers).
Note that the bottom and top edges are coincide, that is, there is no edge effect.

\section{Conclusions}
\label{sec:conclusions}
The shaped extensions of SSA are considerably extend the area of the method applications, including
the digital images of non-rectangular shape, images or series with gaps, planar projections of 3D surfaces.
Complemented by the circular case, Shaped SSA is able to deal with cylindrical projections of spherical or elliptic
surfaces and hence to avoid edge effects.

There is a close relation between SSA-like methods and subspace based methods for frequency estimation.
Therefore, new extensions of SSA should produce new extensions of methods for frequency estimation.
In the paper we consider the corresponding ESPRIT-like methods, formulate the algorithms  and the conditions of their correctness.

The algorithms and their implementation are presented for 2D images in the form that can be easily
extended to higher dimensions.

%\section*{APPENDIX}

%Appendixes should appear before the acknowledgment.

\section*{Acknowledgment}
The authors are thankful to Konstantin Usevich for useful comments and suggestions, which helped to improve
the paper presentation.

\end{document}